\journal{}
\newtheorem{definition}{Definition}
\newtheorem{proposition}{Proposition}
\newtheorem{lemma}{Lemma}
\newtheorem{theorem}{Theorem}
\newtheorem{corollary}{Corollary}
\begin{document}

\begin{frontmatter}

\title{Zero-determinant strategies in repeated continuously-relaxed games}

\author{Masahiko Ueda\corref{mycorrespondingauthor}}
\ead{m.ueda@yamaguchi-u.ac.jp}
\address{Graduate School of Sciences and Technology for Innovation, Yamaguchi University, Yamaguchi 753-8511, Japan}

\author{Ayaka Fujita}
\address{Graduate School of Sciences and Technology for Innovation, Yamaguchi University, Yamaguchi 753-8511, Japan}

\begin{abstract}
Mixed extension has played an important role in game theory, especially in the proof of the existence of Nash equilibria in strategic form games.
Mixed extension can be regarded as continuous relaxation of a strategic form game.
Recently, in repeated games, a class of behavior strategies, called zero-determinant strategies, was introduced.
Zero-determinant strategies control payoffs of players by unilaterally enforcing linear relations between payoffs.
There are many attempts to extend zero-determinant strategies so as to apply them to broader situations.
Here, we extend zero-determinant strategies to repeated games where action sets of players in stage game are continuously relaxed.
We see that continuous relaxation broadens the range of possible zero-determinant strategies, compared to the original repeated games.
Furthermore, we introduce a special type of zero-determinant strategies, called one-point zero-determinant strategies, which repeat only one continuously-relaxed action in all rounds.
By investigating several examples, we show that some property of mixed-strategy Nash equilibria can be reinterpreted as a payoff-control property of one-point zero-determinant strategies.
\end{abstract}

\begin{keyword}
Repeated games; Zero-determinant strategies; Continuous relaxation
\end{keyword}

\end{frontmatter}


\section{Introduction}
\label{sec:intro}
In game theory, mixed strategies, which play pure strategies probabilistically, have played an important role \cite{FudTir1991,OsbRub1994}.
Nash proved the existence of a Nash equilibrium in mixed extensions of strategic form games \cite{Nas1950}.
This result is true even if a corresponding pure strategy game does not contain any Nash equilibria.
In two-player zero-sum games, the existence of Nash equilibria is known as a minimax theorem, and it is a special case of duality theorem in linear programming \cite{Adl2013}.
Linear programming is a continuous relaxation of integer linear programming.
Continuous relaxation is a technique in optimization theory which relaxes discrete variables to continuous variables, and generally makes problems easier \cite{CLRS2022}.
Mixed extension can be regarded as continuous relaxation of a strategic form game.
For strategic form games with finite action spaces, mixed extension relaxes discrete actions to continuous variables interpreted as probability.

In repeated games, a class of behavior strategies, called zero-determinant (ZD) strategies, was discovered in 2012 \cite{PreDys2012}.
A ZD strategy controls payoffs in a repeated game by unilaterally enforcing linear relationships between payoffs \cite{HRZ2015,HTS2015,HDND2016,MamIch2020,Ued2022c}.
Although ZD strategies were originally introduced in the repeated prisoner's dilemma game, the concept of ZD strategies was later extended to arbitrary stage games \cite{McAHau2016,McAHau2017}.
Concurrently, a necessary and sufficient condition for the existence of ZD strategies was specified for stage games where each action set is a finite set \cite{Ued2022b}.
Whereas this condition cannot be applied to stage games where action sets of some players are infinite sets, McAvoy and Hauert also provided a procedure to explicitly construct ZD strategies (called two-point ZD strategies) for arbitrary stage games when some sufficient condition is satisfied \cite{McAHau2016}.
Interestingly, the latter sufficient condition coincides with the former necessary and sufficient condition \cite{Ued2025}.
Therefore, if action sets of all players are finite sets, the existence of two-point ZD strategies is equivalent to the existence of ZD strategies.

In this paper, we introduce ZD strategies for repeated versions of mixed extensions of strategic form games, when the original action set of each player is a finite set.
Such extension can be regarded as ZD strategies on continuously-relaxed action space.
By using the technique proposed by Ref. \cite{McAHau2016}, we can construct two-point ZD strategies in our setting.
We then investigate difference between an existence condition of such two-point ZD strategies on continuously-relaxed action sets and that of ZD strategies on the original (finite) action sets.
Through construction of an example, we see that continuous relaxation broadens the range of possible ZD strategies, compared to the original repeated games.
Furthermore, as a special case of two-point ZD strategies, we introduce the concept of one-point ZD strategies, which use only one continuously-relaxed action.
We show that, in several examples, a property of mixed-strategy Nash equilibria can be interpreted as a payoff-control property of one-point ZD strategies.

This paper is organized as follows.
In Section \ref{sec:model}, we introduce our model of repeated games with continuous relaxation.
In Section \ref{sec:preliminaries}, we introduce ZD strategies, and previous results on the existence of two-point ZD strategies.
In Section \ref{sec:existence}, we investigate difference between an existence condition of two-point ZD strategies on continuously-relaxed action sets and an existence condition of ZD strategies on the original finite action sets.
Several examples of two-point ZD strategies are provided in this section.
In Section \ref{sec:one-point}, we introduce the concept of one-point ZD strategies.
By analyzing several examples, we discuss a relation between a property of mixed-strategy Nash equilibria and a property of one-point ZD strategies in the section.
Section \ref{sec:conclusion} is devoted to concluding remarks.

\section{Model}
\label{sec:model}
We first introduce a strategic game.
We write the set of players as $\mathcal{N}$.
We also write the set of actions of player $j\in \mathcal{N}$ as $A_j$.
The payoff function of player $j\in \mathcal{N}$ is written as $s_j: \prod_{k\in \mathcal{N}}A_k \rightarrow \mathbb{R}$.
A strategic game is defined by these three elements as $G:=\left( \mathcal{N}, \left\{ A_j \right\}_{j\in \mathcal{N}}, \left\{ s_j \right\}_{j\in \mathcal{N}} \right)$ \cite{FudTir1991,OsbRub1994}.
In this paper, we assume that $\mathcal{N}$ and $A_j$ $\left( \forall j \in \mathcal{N} \right)$ are finite sets.
We introduce notations $\mathcal{A}:= \prod_{k\in \mathcal{N}}A_k$ and $A_{-j}:= \prod_{k\neq j}A_k$.
We write an action profile as $\bm{a}:=\left( a_k \right)_{k\in \mathcal{N}} \in \mathcal{A}$.
For simplicity, we also introduce the notation $a_{-j}:= \left( a_k \right)_{k\neq j} \in A_{-j}$.
When we focus on an action of player $j$ in $\bm{a}$, we write $\bm{a}=(a_j, a_{-j})$.

Next, we introduce mixed extensions of the strategic games.
A mixed extension of the strategic game $G$ is defined as $\tilde{G}:=\left( \mathcal{N}, \left\{ \Delta\left(A_j\right) \right\}_{j\in \mathcal{N}}, \left\{ u_j \right\}_{j\in \mathcal{N}} \right)$, where $\Delta\left(A_j\right)$ is the set of all probability distributions on $A_j$, and $u_j$ is the expected payoff of player $j\in \mathcal{N}$.
When we introduce the notations $p_j:=\left( p_j\left( a_j \right) \right)_{a_j \in A_j} \in \Delta\left(A_j\right)$ and $\bm{p}:=\left( p_k \right)_{k\in \mathcal{N}} \in \prod_{k \in \mathcal{N}} \Delta\left(A_k\right)$, we can write
\begin{align}
 u_j \left( \bm{p} \right) &= \sum_{\bm{a}} \left\{ \prod_{k\in \mathcal{N}} p_k\left( a_k \right) \right\} s_j \left( \bm{a} \right).
\end{align}
Similarly as above, we introduce the notations $\tilde{\Delta}\left( \mathcal{A} \right) := \prod_{k \in \mathcal{N}} \Delta\left(A_k\right)$, $\tilde{\Delta}\left( A_{-j} \right) := \prod_{k \neq j} \Delta\left(A_k\right)$, and $p_{-j}:= \left( p_k \right)_{k\neq j} \in \tilde{\Delta}\left( A_{-j} \right)$.
The strategy $p_j\in \Delta\left(A_j\right)$ is called a \emph{mixed strategy} of player $j\in \mathcal{N}$.
Mixed strategies can be regarded as continuous relaxation of actions in the original game $G$.

There are many interpretation on mixed strategies \cite{OsbRub1994}.
In this paper, we interpret a mixed strategy $p_j$ as a strategy realized by population $j$.
That is, a player $j\in \mathcal{N}$ is a population of individuals, and games are played by populations.
Therefore, a mixed strategy of each player can be explicitly observed.

We then consider the repeated version of mixed extensions of the strategic games.
We write a mixed-strategy profile at $t$-th round as $\bm{p}^{(t)}$.
When we write a history of mixed-strategy profiles between $t^\prime$-th round and $t$-th round with $t^\prime \leq t$ as $h_{[t^\prime,t]}:= \left\{ \bm{p}^{(s)} \right\}_{s=t^\prime}^t$, a behavior strategy of player $j$ in the repeated game is defined as
\begin{align}
 \mathcal{T}_j &:= \left\{ T^{(t)}_j \left( p^{(t)}_j | h_{[1,t-1]} \right) | t\in \mathbb{N}, p^{(t)}_j \in \Delta\left(A_j\right), h_{[1,t-1]} \in \tilde{\Delta}\left( \mathcal{A} \right)^{t-1} \right\},
 \label{eq:behavior_strategy}
\end{align}
where $T^{(t)}_j \left( p^{(t)}_j | h_{[1,t-1]} \right)$ is a conditional probability density function of player $j$ at $t$-th round using a mixed strategy $p^{(t)}_j$ when a history of mixed-strategy profiles is $h_{[1,t-1]}$.
We write the expected value of the quantity $Q$ with respect to the behavior strategies $\left\{ \mathcal{T}_k \right\}_{k\in \mathcal{N}}$ of all players as $\mathbb{E} \left[ Q \right]$.
The payoffs in the repeated game is defined as
\begin{align}
 \mathcal{U}_j &:= \lim_{T\rightarrow \infty} \frac{1}{T} \sum_{t=1}^T \mathbb{E} \left[ u_j \left( \bm{p}^{(t)} \right) \right] \quad (j \in  \mathcal{N}).
\end{align}
That is, in this paper we assume that there is no discounting.

\section{Preliminaries}
\label{sec:preliminaries}
In this section, we introduce the concept of zero-determinant strategies and previous studies about the way to specify the existence of zero-determinant strategies.
Below we write the Kronecker delta and the Dirac delta function as $\delta_{a, a^\prime}$ and $\delta(p)$, respectively.

A time-independent memory-one strategy of player $j$ is defined as $\mathcal{T}_j$ with
\begin{align}
T^{(t)}_j \left( p^{(t)}_j | h_{[1,t-1]} \right) &= T_j \left( p^{(t)}_j | h_{[t-1,t-1]} \right)
\end{align}
for $\forall t\in \mathbb{N}, \forall p^{(t)}_j \in \Delta\left(A_j\right), \forall h_{[1,t-1]} \in \tilde{\Delta}\left( \mathcal{A} \right)^{t-1}$.
For such time-independent memory-one strategies, we introduce zero-determinant strategies, by arranging the definition in arbitrary action space \cite{McAHau2016} for our mixed strategy space.
\begin{definition}
\label{def:ZD}
A time-independent memory-one strategy of player $j$ is a \emph{zero-determinant (ZD) strategy} when $T_j$ satisfies
\begin{align}
 \int \psi(p_j) T_j\left( p_j | \bm{p}^\prime \right) dp_j - \psi\left( p^\prime_j \right) &= \sum_{k\in \mathcal{N}} \alpha_{k} u_{k} \left( \bm{p}^{\prime} \right) + \alpha_0
 \label{eq:AS}
\end{align}
with some coefficients $\left\{ \alpha_{b} \right\}$ and some bounded function $\psi(\cdot)$.
\end{definition}
When action sets are not finite sets as in our case $\Delta\left(A_k\right)$, ZD strategies are also called as \emph{autocratic strategies}.
It has been known that a ZD strategy (\ref{eq:AS}) unilaterally enforces a linear relation between payoffs \cite{McAHau2016}
\begin{align}
 0 &= \sum_{k\in \mathcal{N}} \alpha_{k} \mathcal{U}_j + \alpha_0.
 \label{eq:linear_AS}
\end{align}
When a ZD strategy unilaterally enforces Eq. (\ref{eq:linear_AS}), we call it a ZD strategy controlling $\sum_{k \in \mathcal{N}} \alpha_{k} u_{k} + \alpha_0$.
The quantity such as $T_j\left( p_j | \bm{p}^\prime \right) - \delta(p_j-p^\prime_j)$ in the left-hand side of Eq. (\ref{eq:AS}) has been called as the Press-Dyson functions \cite{Aki2016,McAHau2016}, which describes difference between the time-independent memory-one strategy and the Repeat strategy.
Definition \ref{def:ZD} means that a linear combination of the Press-Dyson functions of a ZD strategy is described by a linear combination of payoff functions and a constant function.

Below we write $B\left( \bm{a} \right):=\sum_{k \in \mathcal{N}} \alpha_{k} s_{k} \left( \bm{a} \right) + \alpha_0$ and $\tilde{B}\left( \bm{p} \right):=\sum_{k \in \mathcal{N}} \alpha_{k} u_{k} \left( \bm{p} \right) + \alpha_0$.
It should be noted that
\begin{align}
 \tilde{B}\left( \bm{p} \right) &= \sum_{\bm{a}} \left\{ \prod_{k\in \mathcal{N}} p_k\left( a_k \right) \right\} B \left( \bm{a} \right).
 \label{eq:Btilde}
\end{align}
McAvoy and Hauert provided a sufficient condition for the existence of ZD strategies in arbitrary action space \cite{McAHau2016}.
Here we again arrange their result in order to apply it to our mixed strategy space.
\begin{proposition}
\label{prop:two-point}
If there exist two mixed strategies $\underline{p}_j, \overline{p}_j \in \Delta(A_j)$ and a constant $W>0$ such that
\begin{align}
 -W &\leq \tilde{B}\left( \underline{p}_j, p_{-j} \right) \leq 0 \quad (\forall p_{-j} \in \tilde{\Delta}(A_{-j})) \nonumber \\
 0 &\leq \tilde{B}\left( \overline{p}_j, p_{-j} \right) \leq W \quad (\forall p_{-j} \in \tilde{\Delta}(A_{-j})),
 \label{eq:existence_condition}
\end{align}
then the memory-one strategy of player $j$,
\begin{align}
 T_j\left( p_j | \bm{p}^\prime \right) &= \delta\left( p_j-p_j^\prime \right) + \frac{1}{W} \tilde{B}\left( \bm{p}^{\prime} \right) \delta\left( p_j-\underline{p}_j \right) - \frac{1}{W} \tilde{B}\left( \bm{p}^{\prime} \right) \delta\left( p_j-\overline{p}_j \right) \quad (\forall p_j^\prime \in \Delta\left(A_j\right)^\prime, \forall p_{-j}^\prime \in \tilde{\Delta}(A_{-j})),
 \label{eq:transition_two-point}
\end{align}
where $\Delta\left(A_j\right)^\prime:=\left\{ \underline{p}_j, \overline{p}_j \right\}$ is a restricted action set of player $j$, is a ZD strategy controlling $\tilde{B}$.
\end{proposition}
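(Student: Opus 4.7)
The plan is to verify three things in turn: (i) that the proposed $T_j$ in (\ref{eq:transition_two-point}) is a well-defined conditional probability density supported in the restricted set $\Delta(A_j)' = \{\underline{p}_j, \overline{p}_j\}$; (ii) that it satisfies the ZD condition (\ref{eq:AS}) for a suitable bounded $\psi$ with coefficients matching $\tilde{B}$; and (iii) that the enforced linear relation (\ref{eq:linear_AS}) follows from the already-known payoff-enforcement property of ZD strategies stated just below Definition~\ref{def:ZD}.

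First I would check normalization by integrating $T_j(p_j \mid \bm{p}')$ with respect to $p_j$: the three delta functions contribute $1$, $\tfrac{1}{W}\tilde{B}(\bm{p}')$, and $-\tfrac{1}{W}\tilde{B}(\bm{p}')$, which sum to $1$. Non-negativity is the only substantive use of the hypothesis (\ref{eq:existence_condition}). Since the dynamics is restricted to $p_j' \in \{\underline{p}_j, \overline{p}_j\}$, I would split into two cases. If $p_j' = \underline{p}_j$, then by the first line of (\ref{eq:existence_condition}) the coefficient of $\delta(p_j - \underline{p}_j)$ is $1 + \tfrac{1}{W}\tilde{B}(\underline{p}_j, p_{-j}') \in [0,1]$ and the coefficient of $\delta(p_j - \overline{p}_j)$ is $-\tfrac{1}{W}\tilde{B}(\underline{p}_j, p_{-j}') \in [0,1]$. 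Symmetrically, if $p_j' = \overline{p}_j$, the second line of (\ref{eq:existence_condition}) yields coefficients in $[0,1]$. In particular, the support of $T_j(\cdot\mid\bm{p}')$ remains in $\{\underline{p}_j, \overline{p}_j\}$, so the restriction to $\Delta(A_j)'$ is consistent.

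Next, for the ZD identity, I would take $\psi$ defined on $\{\underline{p}_j, \overline{p}_j\}$ (arbitrary bounded extension elsewhere, since the support is confined) with $\psi(\underline{p}_j) - \psi(\overline{p}_j) = W$, e.g.\ $\psi(\underline{p}_j) = W$ and $\psi(\overline{p}_j) = 0$. Substituting (\ref{eq:transition_two-point}) into the left-hand side of (\ref{eq:AS}) gives
\begin{align*}
 \int \psi(p_j)\, T_j(p_j \mid \bm{p}')\, dp_j - \psi(p_j')
 &= \frac{1}{W}\tilde{B}(\bm{p}')\bigl(\psi(\underline{p}_j) - \psi(\overline{p}_j)\bigr)
 = \tilde{B}(\bm{p}'),
\end{align*}
which is precisely $\sum_{k \in \mathcal{N}} \alpha_k u_k(\bm{p}') + \alpha_0$, so Definition~\ref{def:ZD} is satisfied.

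Finally, the enforced linear relation (\ref{eq:linear_AS}) on long-run payoffs follows immediately from the general fact, recalled after Definition~\ref{def:ZD}, that any ZD strategy in the sense of (\ref{eq:AS}) enforces $\sum_k \alpha_k \mathcal{U}_k + \alpha_0 = 0$. There is no genuine obstacle in this proof; the only point requiring care is the case analysis on $p_j' \in \{\underline{p}_j, \overline{p}_j\}$ used to confirm non-negativity of $T_j$, which is where the two-sided bounds of width $W$ in (\ref{eq:existence_condition}) are used essentially.
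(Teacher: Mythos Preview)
Your proof is correct and follows essentially the same route as the paper: the paper's argument consists of exactly the same substitution you perform, obtaining $\int \psi(p_j) T_j(p_j\mid\bm{p}')\,dp_j - \psi(p_j') = [\psi(\underline{p}_j)-\psi(\overline{p}_j)]\,W^{-1}\tilde{B}(\bm{p}')$ for arbitrary $\psi$ with $\psi(\underline{p}_j)\neq\psi(\overline{p}_j)$, whereas you specialize to $\psi(\underline{p}_j)-\psi(\overline{p}_j)=W$. Your verification that $T_j$ is a genuine conditional probability density (normalization and the case split for non-negativity) is actually more explicit than the paper, which leaves that check implicit.
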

We can easily check that the strategy (\ref{eq:transition_two-point}) satisfies Definition \ref{def:ZD} as
\begin{align}
 \int \psi(p_j) T_j\left( p_j | \bm{p}^\prime \right) dp_j - \psi\left( p^\prime_j \right) &= \left[ \psi\left( \underline{p}_j \right) - \psi\left( \overline{p}_j \right) \right] \frac{1}{W} \tilde{B}\left( \bm{p}^{\prime} \right) \quad (\forall p_j^\prime \in \Delta\left(A_j\right)^\prime, \forall p_{-j}^\prime \in \tilde{\Delta}(A_{-j}))
\end{align}
with arbitrary $\psi$ with $\psi\left( \underline{p}_j \right) \neq \psi\left( \overline{p}_j \right)$.
Such ZD strategies can be called as \emph{two-point ZD strategies}, because they use only two $p_j$.
It should be noted that the existence of $W$ is trivial for our case because the payoffs are bounded.

The condition (\ref{eq:existence_condition}) is also known to be a necessary condition for the existence of two-point ZD strategies \cite{Ued2022b,Ued2025}.
Indeed, if a two-point ZD strategy of player $j$,
\begin{align}
 T_j\left( p_j | \bm{p}^\prime \right) &= D\left( \bm{p}^{\prime} \right) \delta\left( p_j-p_j^{(1)} \right) + \left( 1-D\left( \bm{p}^{\prime} \right) \right) \delta\left( p_j-p_j^{(2)} \right) \quad \left( \forall p_j^\prime \in \left\{ p_j^{(1)}, p_j^{(2)} \right\}, \forall p_{-j}^\prime \in \tilde{\Delta}(A_{-j}) \right),
\end{align}
satisfying $0\leq D(\cdot ) \leq 1$ and
\begin{align}
 \int \psi(p_j) T_j\left( p_j | \bm{p}^\prime \right) dp_j - \psi\left( p^\prime_j \right) &= \tilde{B}\left( \bm{p}^{\prime} \right) \quad \left( \forall p_j^\prime \in \left\{ p_j^{(1)}, p_j^{(2)} \right\}, \forall p_{-j}^\prime \in \tilde{\Delta}(A_{-j}) \right)
\end{align}
exists, we obtain
\begin{align}
\tilde{B}\left( \bm{p}^{\prime} \right) &=  \psi\left( p_j^{(1)} \right) D\left( \bm{p}^{\prime} \right) + \psi\left( p_j^{(2)} \right) \left( 1-D\left( \bm{p}^{\prime} \right) \right) - \psi\left( p^\prime_j \right).
\end{align}
Especially, equalities
\begin{align}
\tilde{B}\left( p_j^{(1)}, p_{-j}^\prime  \right) &=  \left[ \psi\left( p_j^{(2)} \right) - \psi\left( p_j^{(1)} \right) \right] \left( 1-D\left( p_j^{(1)}, p_{-j}^\prime \right) \right) \nonumber \\
\tilde{B}\left( p_j^{(2)}, p_{-j}^\prime  \right) &=  - \left[ \psi\left( p_j^{(2)} \right) - \psi\left( p_j^{(1)} \right) \right] D\left( p_j^{(2)}, p_{-j}^\prime \right)
\end{align}
hold.
If $\psi\left( p_j^{(2)} \right) - \psi\left( p_j^{(1)} \right) \geq 0$, we obtain
\begin{align}
\tilde{B}\left( p_j^{(1)}, p_{-j}^\prime  \right) &\geq 0 \quad \left( \forall p_{-j}^\prime \in \tilde{\Delta}(A_{-j}) \right) \nonumber \\
\tilde{B}\left( p_j^{(2)}, p_{-j}^\prime  \right) &\leq 0  \quad \left( \forall p_{-j}^\prime \in \tilde{\Delta}(A_{-j}) \right).
\end{align}
If $\psi\left( p_j^{(2)} \right) - \psi\left( p_j^{(1)} \right) \leq 0$, we obtain
\begin{align}
\tilde{B}\left( p_j^{(1)}, p_{-j}^\prime  \right) &\leq 0 \quad \left( \forall p_{-j}^\prime \in \tilde{\Delta}(A_{-j}) \right) \nonumber \\
\tilde{B}\left( p_j^{(2)}, p_{-j}^\prime  \right) &\geq 0  \quad \left( \forall p_{-j}^\prime \in \tilde{\Delta}(A_{-j}) \right).
\end{align}
This result is summarized to the following proposition.
\begin{proposition}
\label{prop:two-point_necessity}
If a two-point ZD strategy of player $j$ controlling $\tilde{B}$ exists, then there exist two mixed strategies $\underline{p}_j, \overline{p}_j \in \Delta(A_j)$ such that
\begin{align}
 \tilde{B}\left( \underline{p}_j, p_{-j} \right) &\leq 0 \quad (\forall p_{-j} \in \tilde{\Delta}(A_{-j})) \nonumber \\
 \tilde{B}\left( \overline{p}_j, p_{-j} \right) &\geq 0 \quad (\forall p_{-j} \in \tilde{\Delta}(A_{-j})).
\end{align}
\end{proposition}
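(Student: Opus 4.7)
The plan is to unpack the two-point form of the strategy, substitute it into the ZD defining equation, and read off the required sign constraints by evaluating the resulting identity at the two distinguished action points and doing a case analysis on $\psi$.

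First I would write the assumed strategy in its general two-point form with supports $p_j^{(1)}, p_j^{(2)}$ and mixing weight $D(\bm{p}^\prime) \in [0,1]$, and plug it into the ZD condition (\ref{eq:AS}). Because the conditional density is a finite combination of Dirac deltas, the integral against $\psi$ collapses to a finite sum, and the ZD requirement (with right-hand side $\tilde{B}(\bm{p}^\prime)$) becomes the algebraic identity
\begin{equation*}
\tilde{B}(\bm{p}^\prime) = \psi(p_j^{(1)}) D(\bm{p}^\prime) + \psi(p_j^{(2)})\bigl(1 - D(\bm{p}^\prime)\bigr) - \psi(p_j^\prime),
\end{equation*}
required to hold for every $p_j^\prime \in \{p_j^{(1)}, p_j^{(2)}\}$ and every $p_{-j}^\prime \in \tilde{\Delta}(A_{-j})$.

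Next I would specialise this identity to $p_j^\prime = p_j^{(1)}$ and to $p_j^\prime = p_j^{(2)}$ separately. In each case the substituted $\psi(p_j^\prime)$ cancels one of the two $\psi$-terms on the right, so $\tilde{B}(p_j^{(i)}, p_{-j}^\prime)$ factors as the scalar $\psi(p_j^{(2)}) - \psi(p_j^{(1)})$ multiplied by a quantity lying in $[0,1]$ (namely $1 - D$ or $-D$ evaluated at the appropriate point). Since $D \in [0,1]$, the sign of $\tilde{B}$ at each $p_j^{(i)}$, uniformly in $p_{-j}^\prime$, is controlled entirely by the sign of $\psi(p_j^{(2)}) - \psi(p_j^{(1)})$. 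Splitting into the cases $\psi(p_j^{(2)}) \geq \psi(p_j^{(1)})$ and $\psi(p_j^{(2)}) \leq \psi(p_j^{(1)})$ then pins down which of $p_j^{(1)}, p_j^{(2)}$ should be labelled $\underline{p}_j$ and which should be labelled $\overline{p}_j$ so that the desired non-strict inequalities hold for every $p_{-j}$.

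The only subtlety I anticipate is the degenerate situation $\psi(p_j^{(1)}) = \psi(p_j^{(2)})$: the factored scalar then vanishes and $\tilde{B}(p_j^{(i)}, \cdot)$ is forced to be identically zero on the two-point support, but this is still consistent with the conclusion since either labelling makes the non-strict inequalities hold with equality. Beyond this bit of bookkeeping, the argument is a direct calculation with no analytic or combinatorial obstruction, so I do not expect a substantive hard step.
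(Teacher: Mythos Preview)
Your proposal is correct and follows essentially the same route as the paper: write the two-point strategy as $D(\bm{p}')\delta(p_j-p_j^{(1)})+(1-D(\bm{p}'))\delta(p_j-p_j^{(2)})$, substitute into the ZD identity to get $\tilde{B}(\bm{p}')=\psi(p_j^{(1)})D(\bm{p}')+\psi(p_j^{(2)})(1-D(\bm{p}'))-\psi(p_j')$, specialise to $p_j'\in\{p_j^{(1)},p_j^{(2)}\}$, and do a sign split on $\psi(p_j^{(2)})-\psi(p_j^{(1)})$. One small wording slip: the factor multiplying $\psi(p_j^{(2)})-\psi(p_j^{(1)})$ at $p_j'=p_j^{(2)}$ is $-D\in[-1,0]$, not in $[0,1]$, but your conclusion about the signs is still correct and matches the paper.
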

Therefore, we find that the condition (\ref{eq:existence_condition}) is a necessary and sufficient condition for the existence of two-point ZD strategies.

\section{Existence of two-point ZD strategies}
\label{sec:existence}
We first provide a sufficient condition for the existence of two-point ZD strategies in our mixed extension games.
\begin{theorem}
\label{thm:pureZDS}
If there exist two actions $\underline{a}_j, \overline{a}_j \in A_j$ such that
\begin{align}
 B\left( \underline{a}_j, a_{-j} \right) &\leq 0 \quad (\forall a_{-j} \in A_{-j}) \nonumber \\
 B\left( \overline{a}_j, a_{-j} \right) &\geq 0 \quad (\forall a_{-j} \in A_{-j}),
 \label{eq:existence_condition_pure}
\end{align}
then a two-point ZD strategy of player $j$ controlling $\tilde{B}$ exists.
\end{theorem}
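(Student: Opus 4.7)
The plan is to reduce this to Proposition \ref{prop:two-point} by lifting the two pure actions $\underline{a}_j,\overline{a}_j$ to the corresponding degenerate mixed strategies in $\Delta(A_j)$. Concretely, I would define $\underline{p}_j,\overline{p}_j \in \Delta(A_j)$ by $\underline{p}_j(a_j)=\delta_{a_j,\underline{a}_j}$ and $\overline{p}_j(a_j)=\delta_{a_j,\overline{a}_j}$, so that a mixed profile $(\underline{p}_j,p_{-j})$ puts all its $j$-mass on the action $\underline{a}_j$ (and likewise for $\overline{p}_j$).

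With this choice, I would plug into the multilinear identity (\ref{eq:Btilde}). For any $p_{-j}\in\tilde{\Delta}(A_{-j})$,
\begin{align}
\tilde{B}\left(\underline{p}_j,p_{-j}\right) &= \sum_{a_{-j}}\left\{\prod_{k\neq j} p_k(a_k)\right\} B\left(\underline{a}_j,a_{-j}\right),
\end{align}
and similarly for $\overline{p}_j$. Because the coefficients $\prod_{k\neq j}p_k(a_k)$ are nonnegative and sum to one, the sign hypothesis (\ref{eq:existence_condition_pure}) transfers termwise to give $\tilde{B}(\underline{p}_j,p_{-j})\leq 0$ and $\tilde{B}(\overline{p}_j,p_{-j})\geq 0$ for all $p_{-j}$.

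For the uniform bound, I would invoke the fact that $\mathcal{A}$ is finite (so $|B(\bm{a})|$ attains a maximum), and set $W:=\max_{\bm{a}\in\mathcal{A}}|B(\bm{a})|+1$, say. The same convex-combination argument then yields $|\tilde{B}(\bm{p})|\leq \max_{\bm{a}}|B(\bm{a})| < W$ for every mixed profile, so in particular $-W\leq\tilde{B}(\underline{p}_j,p_{-j})$ and $\tilde{B}(\overline{p}_j,p_{-j})\leq W$. This matches the remark in the paper that the existence of a valid $W$ is automatic when payoffs are bounded.

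At this point the hypothesis (\ref{eq:existence_condition}) of Proposition \ref{prop:two-point} is verified, and I would simply cite that proposition: the memory-one strategy (\ref{eq:transition_two-point}) built from $\underline{p}_j,\overline{p}_j,W$ is a two-point ZD strategy of player $j$ controlling $\tilde{B}$, completing the proof. There is no real obstacle here; the only thing to be careful about is making the pure-to-mixed embedding explicit so that the multilinearity of $\tilde{B}$ in each $p_k$ really does propagate the pointwise sign conditions on $B$ to the mixed setting.
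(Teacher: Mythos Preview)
Your proposal is correct and matches the paper's own proof essentially line for line: define $\underline{p}_j,\overline{p}_j$ as the point masses at $\underline{a}_j,\overline{a}_j$, use (\ref{eq:Btilde}) to write $\tilde{B}(\underline{p}_j,p_{-j})$ and $\tilde{B}(\overline{p}_j,p_{-j})$ as nonnegative combinations of the $B(\underline{a}_j,a_{-j})$ and $B(\overline{a}_j,a_{-j})$, and invoke Proposition~\ref{prop:two-point}. The only cosmetic difference is that you spell out a concrete choice of $W$, whereas the paper simply relies on its earlier remark that boundedness of the payoffs makes the existence of $W$ automatic.
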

\begin{proof}
When we consider
\begin{align}
 \underline{p}_j(a_j) &= \delta_{a_j, \underline{a}_j} \\
 \overline{p}_j(a_j) &= \delta_{a_j, \overline{a}_j},
\end{align}
we find
\begin{align}
 \tilde{B}\left( \underline{p}_j, p_{-j} \right) &= \sum_{a_{-j}} \left\{ \prod_{k\neq j} p_k\left( a_k \right) \right\} B \left(  \underline{a}_j, a_{-j} \right) \leq 0 \quad (\forall p_{-j} \in \tilde{\Delta}(A_{-j})) \\
 \tilde{B}\left( \overline{p}_j, p_{-j} \right) &= \sum_{a_{-j}} \left\{ \prod_{k\neq j} p_k\left( a_k \right) \right\} B \left(  \overline{a}_j, a_{-j} \right) \geq 0 \quad (\forall p_{-j} \in \tilde{\Delta}(A_{-j})).
\end{align}
Therefore, according to Proposition \ref{prop:two-point}, a two-point ZD strategy exists.
\end{proof}
We remark that the condition (\ref{eq:existence_condition_pure}) is known as a necessary and sufficient condition for the existence of ZD strategies in original strategic games (with finite action sets) \cite{Ued2022b}, and, moreover, it is equivalent to a necessary and sufficient condition for the existence of two-point ZD strategies in original strategic games \cite{Ued2025}.
Therefore, even if we consider mixed extension (continuous relaxation) of strategic games, the same type of payoff control as one in original strategic games is possible.

Next, we rewrite the condition (\ref{eq:existence_condition}) into a simpler form.
\begin{lemma}
\label{lem:existence_equivalent}
The existence of two mixed strategies $\underline{p}_j, \overline{p}_j \in \Delta(A_j)$ such that
\begin{align}
 \tilde{B}\left( \underline{p}_j, p_{-j} \right) &\leq 0 \quad (\forall p_{-j} \in \tilde{\Delta}(A_{-j})) \nonumber \\
 \tilde{B}\left( \overline{p}_j, p_{-j} \right) &\geq 0 \quad (\forall p_{-j} \in \tilde{\Delta}(A_{-j}))
 \label{eq:existence_condition_mod}
\end{align}
is equivalent to the existence of two mixed strategies $\underline{p}_j, \overline{p}_j \in \Delta(A_j)$ such that
\begin{align}
 \sum_{a_j}  \underline{p}_j(a_j) B\left( a_j, a_{-j} \right) &\leq 0 \quad (\forall a_{-j} \in A_{-j}) \nonumber \\
 \sum_{a_j}  \overline{p}_j(a_j) B\left( a_j, a_{-j} \right) &\geq 0 \quad (\forall a_{-j} \in A_{-j}).
 \label{eq:existence_condition_equivalent}
\end{align}
Furthermore, this condition is also equivalent to
\begin{align}
 \min_{p_j} \max_{a_{-j}} \sum_{a_j}  p_j(a_j) B\left( a_j, a_{-j} \right) &\leq 0 \nonumber \\
 \max_{p_j} \min_{a_{-j}} \sum_{a_j}  p_j(a_j) B\left( a_j, a_{-j} \right) &\geq 0.
 \label{eq:existence_condition_minmax}
\end{align}
\end{lemma}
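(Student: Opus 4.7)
The plan is to verify the two equivalences separately: first \eqref{eq:existence_condition_mod} $\Leftrightarrow$ \eqref{eq:existence_condition_equivalent}, and then \eqref{eq:existence_condition_equivalent} $\Leftrightarrow$ \eqref{eq:existence_condition_minmax}. Both reduce to exploiting multilinearity of $\tilde{B}$ in the mixed strategies together with the fact that pure action profiles sit inside the mixed simplex as Dirac measures.

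For the first equivalence, I would start from identity \eqref{eq:Btilde} applied only to the opponents. Since $\underline{p}_j$ (resp.\ $\overline{p}_j$) is fixed, we may write
\begin{align}
 \tilde{B}\left( \underline{p}_j, p_{-j} \right) &= \sum_{a_{-j}} \left\{ \prod_{k \neq j} p_k(a_k) \right\} \sum_{a_j} \underline{p}_j(a_j) B\left( a_j, a_{-j} \right),
\end{align}
and analogously for $\overline{p}_j$. The $(\Leftarrow)$ direction is then immediate: if $\sum_{a_j} \underline{p}_j(a_j) B(a_j, a_{-j}) \leq 0$ for every pure $a_{-j}$, the mixed expectation is a convex combination of nonpositive numbers, hence nonpositive; the $\overline{p}_j$ inequality is symmetric. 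For $(\Rightarrow)$, specialize $p_{-j}$ to the Dirac measure concentrated on any pure profile $a_{-j}$, which is a legitimate element of $\tilde{\Delta}(A_{-j})$, and the mixed inequality collapses to the pure one.

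For the second equivalence, I would argue that \eqref{eq:existence_condition_equivalent} says exactly that $\max_{a_{-j}} \sum_{a_j} \underline{p}_j(a_j) B(a_j, a_{-j}) \leq 0$ for some $\underline{p}_j$, and existence of such $\underline{p}_j$ is by definition $\min_{p_j} \max_{a_{-j}} \sum_{a_j} p_j(a_j) B(a_j, a_{-j}) \leq 0$, with the same reasoning applied symmetrically for $\overline{p}_j$. The only subtlety is whether the minimum (resp.\ maximum) is actually attained: the objective is the pointwise maximum of finitely many continuous (indeed linear) functions of $p_j$, hence continuous on the compact simplex $\Delta(A_j)$, so extrema are attained and the $\min$/$\max$ notation is justified.

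The argument is essentially bookkeeping; the only place where one has to be slightly careful is the attainment issue in step two, and the realization that ``$\forall\, p_{-j} \in \tilde{\Delta}(A_{-j})$'' is equivalent to ``$\forall\, a_{-j} \in A_{-j}$'' precisely because $\tilde{B}$ depends linearly on each $p_k$ and the pure profiles are the extreme points of the product simplex. I do not anticipate a genuine obstacle.
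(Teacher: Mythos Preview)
Your proposal is correct and follows essentially the same route as the paper: both directions of the first equivalence are handled via multilinearity of $\tilde{B}$ and specialization to Dirac measures at pure $a_{-j}$, and the second equivalence is obtained by rewriting \eqref{eq:existence_condition_equivalent} as a $\max_{a_{-j}}$/$\min_{a_{-j}}$ condition and then passing to $\min_{p_j}$/$\max_{p_j}$. Your explicit justification of attainment via continuity on the compact simplex is a small improvement over the paper, which simply writes $\arg\min$/$\arg\max$ without comment.
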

\begin{proof}
If Eq. (\ref{eq:existence_condition_equivalent}) holds for some $\underline{p}_j, \overline{p}_j \in \Delta(A_j)$, they trivially satisfy Eq. (\ref{eq:existence_condition_mod}).

Suppose that Eq. (\ref{eq:existence_condition_mod}) holds for some $\underline{p}_j, \overline{p}_j \in \Delta(A_j)$.
If the inequality
\begin{align}
 \sum_{a_j}  \underline{p}_j(a_j) B\left( a_j, a_{-j}^\prime \right) &> 0
\end{align}
holds for some $a_{-j}^\prime \in A_{-j}$, when we consider
\begin{align}
 p_{-j}^\prime (a_{-j}) &:= \delta_{a_{-j}, a_{-j}^\prime} \in \tilde{\Delta}(A_{-j}),
\end{align}
we obtain
\begin{align}
 \tilde{B}\left( \underline{p}_j, p_{-j}^\prime \right) &>0,
\end{align}
leading to contradiction.
Therefore, we find
\begin{align}
 \sum_{a_j}  \underline{p}_j(a_j) B\left( a_j, a_{-j} \right) &\leq 0 \quad (\forall a_{-j} \in A_{-j}).
\end{align}
The inequality
\begin{align}
 \sum_{a_j}  \overline{p}_j(a_j) B\left( a_j, a_{-j} \right) &\geq 0 \quad (\forall a_{-j} \in A_{-j})
\end{align}
holds for a similar reason.

Furthermore, we remark that Eq. (\ref{eq:existence_condition_equivalent}) can be rewritten as
\begin{align}
 \max_{a_{-j}} \sum_{a_j}  \underline{p}_j(a_j) B\left( a_j, a_{-j} \right) &\leq 0 \nonumber \\
 \min_{a_{-j}} \sum_{a_j}  \overline{p}_j(a_j) B\left( a_j, a_{-j} \right) &\geq 0.
\end{align}
If such $\underline{p}_j$ and $\overline{p}_j$ exist, the inequalities (\ref{eq:existence_condition_minmax}) are satisfied, because
\begin{align}
 \min_{p_j} \max_{a_{-j}} \sum_{a_j}  p_j(a_j) B\left( a_j, a_{-j} \right) &\leq \max_{a_{-j}} \sum_{a_j}  \underline{p}_j(a_j) B\left( a_j, a_{-j} \right) \leq 0 \nonumber \\
 \max_{p_j} \min_{a_{-j}} \sum_{a_j}  p_j(a_j) B\left( a_j, a_{-j} \right) &\geq \min_{a_{-j}} \sum_{a_j}  \overline{p}_j(a_j) B\left( a_j, a_{-j} \right) \geq 0.
\end{align}
If Eq. (\ref{eq:existence_condition_minmax}) holds, we can obtain $\underline{p}_j$ and $\overline{p}_j$ by defining
\begin{align}
 \underline{p}_j &= \arg \min_{p_j} \max_{a_{-j}} \sum_{a_j}  p_j(a_j) B\left( a_j, a_{-j} \right) \nonumber \\
 \overline{p}_j &= \arg \max_{p_j} \min_{a_{-j}} \sum_{a_j}  p_j(a_j) B\left( a_j, a_{-j} \right).
\end{align}
\end{proof}
It should be noted that, as we will see in subsection \ref{subsec:PD}, $\underline{p}_j$ and $\overline{p}_j$ are generally not unique.

We now state our main theorem.
\begin{theorem}
\label{thm:broaden}
Mixed extension broadens the range of possible ZD strategies compared to the original repeated games.
\end{theorem}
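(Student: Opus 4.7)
The plan is to prove the theorem by exhibiting an explicit example: a finite stage game $G$ and a linear target $B=\sum_{k}\alpha_{k}s_{k}+\alpha_{0}$ such that no ZD strategy of player $j$ controlling $B$ exists in the repeated game on $G$, yet a two-point ZD strategy controlling $\tilde{B}$ does exist in the repeated mixed extension $\tilde{G}$. Since, by the results quoted from \cite{Ued2022b,Ued2025}, the existence of a ZD strategy in the original game is equivalent to Eq. (\ref{eq:existence_condition_pure}), and since Proposition \ref{prop:two-point}, Proposition \ref{prop:two-point_necessity} and Lemma \ref{lem:existence_equivalent} together give that the existence of a two-point ZD strategy in the mixed extension is equivalent to Eq. (\ref{eq:existence_condition_equivalent}), the task reduces to constructing a $B$ for which no single row of the matrix $[B(a_j,a_{-j})]$ has constant sign but some convex combination of rows does.

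To produce such a $B$ I would work with a two-player stage game in which player $j$ has three actions and the opponent has two, and I would design $[B(a_j,a_{-j})]$ so that every row contains at least one negative entry (ruling out any pure $\overline{a}_j$) while a suitable convex combination of two rows has all entries $\geq 0$. A clean realization takes the three rows to be $(1,-1)$, $(-1,1)$, and $(-1,-1)$: the third row already provides an $\underline{a}_j$ in both regimes, no single row qualifies as $\overline{a}_j$, and the uniform mixture of the first two rows is $(0,0)$, which furnishes an $\overline{p}_j$ in the mixed extension. The underlying stage-game payoffs are fixed by choosing, say, $\alpha_{0}=0$, $\alpha_{k}=\delta_{k,j}$, and $s_j$ equal to the chosen matrix, so the example is concretely instantiable.

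With the example in hand, the two conclusions follow immediately: by Theorem \ref{thm:pureZDS} in its necessary-and-sufficient form, no ZD strategy controlling $B$ exists in $G$; by Proposition \ref{prop:two-point} together with Lemma \ref{lem:existence_equivalent}, a two-point ZD strategy controlling $\tilde{B}$ exists in $\tilde{G}$. Because $\underline{p}_j \neq \overline{p}_j$ in the construction, the resulting strategy in $\tilde{G}$ is genuinely two-point and not the trivial Repeat strategy that would arise if $\underline{p}_j = \overline{p}_j$. The main obstacle, such as it is, lies in choosing an example that is simultaneously small, transparent, and non-degenerate; once a suitable $B$-matrix is selected, both the non-existence on the pure side and the existence on the mixed side reduce to inspecting a handful of entries, so no analytical difficulty remains.
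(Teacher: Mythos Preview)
Your proposal is correct and follows essentially the same approach as the paper: both prove the theorem by exhibiting an explicit stage game and a target $B$ for which condition~(\ref{eq:existence_condition_pure}) fails while condition~(\ref{eq:existence_condition_equivalent}) holds. The paper's example (subsection~\ref{subsec:two-three}) is a symmetric $3\times 3$ game with $B=s_2$ in which \emph{no} pure action furnishes either $\underline{a}_1$ or $\overline{a}_1$, whereas your $3\times 2$ construction is smaller and keeps a pure $\underline{a}_j$ while blocking only the pure $\overline{a}_j$; either choice suffices, and your version is arguably more transparent.
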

We prove this theorem by explicitly constructing an example in subsection \ref{subsec:two-three}, which does not satisfy the condition in Theorem \ref{thm:pureZDS} but contains a ZD strategy in a continuously-relaxed action space.

\subsection{Prisoner's dilemma game}
\label{subsec:PD}
As an example, we consider the prisoner's dilemma game \cite{RCO1965}, where $\mathcal{N}=\{ 1, 2 \}$, $A_j=\{ C, D \}$ $(j=1, 2)$, and the payoffs are given as Table \ref{table:PD}.
We assume that $T>R>P>S$ and $2R>T+S$.
\begin{table}[tb]
  \centering
  \caption{Payoffs of the prisoner's dilemma game.}
  \begin{tabular}{|c|cc|} \hline
    & $C$ & $D$ \\ \hline
   $C$ & $R, R$ & $S, T$ \\
   $D$ & $T, S$ & $P, P$ \\ \hline
  \end{tabular}
  \label{table:PD}
\end{table}

Here we consider the existence of an equalizer-type strategy of player $1$, which unilaterally sets the payoff of the opponent \cite{BNS1997,PreDys2012}.
For such ZD strategies, we need to set
\begin{align}
 B\left( \bm{a} \right) &= s_2\left( \bm{a} \right) - r
 \label{eq:B_equalizer_PD}
\end{align}
with $S\leq r \leq T$.
When we write $p_j=\left( p_j(C), p_j(D) \right)^\mathsf{T}$ $(j=1,2)$, $\tilde{B}$ is given by
\begin{align}
 \tilde{B} \left( \bm{p} \right) &= p_1^\mathsf{T} \left(
\begin{array}{cc}
R-r & T-r \\
S-r & P-r
\end{array}
\right) p_2 \nonumber \\
&= \left(
\begin{array}{cc}
(R-r)p_1(C) + (S-r)\left( 1- p_1(C) \right) & (T-r)p_1(C) + (P-r)\left( 1- p_1(C) \right)
\end{array}
\right) p_2.
\end{align}
When $r>R$, we find
\begin{align}
 (R-r)p_1(C) + (S-r)\left( 1- p_1(C) \right) &< 0
\end{align}
for arbitrary $p_1$.
Therefore, according to Lemma \ref{lem:existence_equivalent}, $\overline{p}_1$ cannot exist, and we cannot construct a corresponding two-point ZD strategy.
Similarly, when $r<P$, we find 
\begin{align}
  (T-r)p_1(C) + (P-r)\left( 1- p_1(C) \right) &> 0
\end{align}
for arbitrary $p_1$.
Therefore, according to Lemma \ref{lem:existence_equivalent}, $\underline{p}_1$ cannot exist, and we cannot construct a corresponding two-point ZD strategy.
When $P\leq r \leq R$, we obtain
\begin{align}
 \tilde{B} \left( \bm{p} \right) &= \left(
\begin{array}{cc}
\left| R-r \right| p_1(C) - \left| r-S \right| \left( 1- p_1(C) \right) & \left| T-r \right| p_1(C) - \left| r-P \right| \left( 1- p_1(C) \right)
\end{array}
\right) p_2 \nonumber \\
&= \left(
\begin{array}{cc}
 \left| R-S \right| p_1(C) - \left| r-S \right| & \left| T-P \right| p_1(C) - \left| r-P \right|
\end{array}
\right) p_2.
\end{align}
Therefore, any $p_1$ such that
\begin{align}
 p_1(C) &\leq \min \left\{ \frac{\left| r-S \right|}{\left| R-S \right|}, \frac{\left| r-P \right|}{\left| T-P \right|} \right\}
\end{align}
can be $\underline{p}_1$.
Similarly, any $p_1$ such that
\begin{align}
 p_1(C) &\geq \max \left\{ \frac{\left| r-S \right|}{\left| R-S \right|}, \frac{\left| r-P \right|}{\left| T-P \right|} \right\}
\end{align}
can be $\overline{p}_1$.
By using such $\underline{p}_1$ and $\overline{p}_1$, we can construct a two-point ZD strategy of player $1$ unilaterally enforcing
\begin{align}
 \mathcal{U}_2 &= r.
\end{align}

In Figure \ref{fig:linear_PD}, we display a relation between $\mathcal{U}_1$ and $\mathcal{U}_2$ when player $1$ uses the two-point ZD strategy (\ref{eq:transition_two-point}) and player $2$ uses randomly-chosen two-point memory-one strategies.
\begin{figure}[tbp]
\begin{center}
\includegraphics[clip, width=8.0cm]{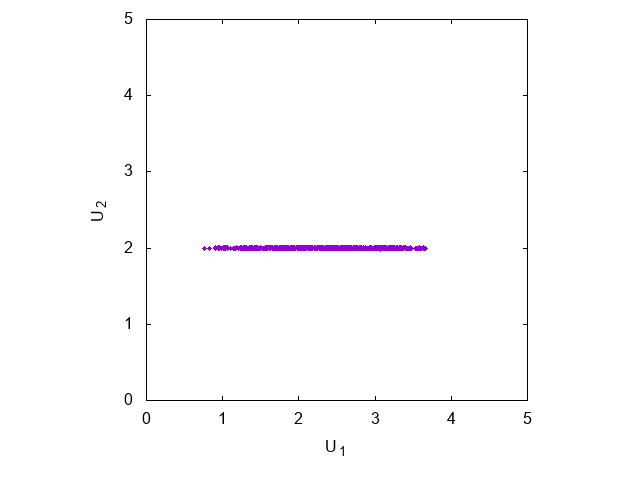}
\end{center}
\caption{A linear relation between $\mathcal{U}_1$ and $\mathcal{U}_2$ when player $1$ uses the two-point ZD strategy (\ref{eq:transition_two-point}) and player $2$ uses $10^3$ randomly-chosen two-point memory-one strategies. Parameters are set to $(R, S, T, P)=(3, 0, 5, 1)$, $r=2$, $W=3$, $\overline{p}_1(C)=2/3$, and $\underline{p}_1(C)=1/4$. Each $\mathcal{U}_j$ is calculated by time average over $10^5$ time steps.}
\label{fig:linear_PD}
\end{figure}
We find that a linear relation $\mathcal{U}_2 = r$ is indeed enforced.

It should be noted that Eq. (\ref{eq:B_equalizer_PD}) satisfies
\begin{align}
 B\left( D, a_2 \right) &= s_2\left( D, a_2 \right) - r \leq 0 \quad (\forall a_2 \in A_2) \nonumber \\
 B\left( C, a_2 \right) &= s_2\left( C, a_2 \right) - r \geq 0 \quad (\forall a_2 \in A_2)
 \label{eq:B_equalizer_PD_two}
\end{align}
for $P\leq r \leq R$.
Therefore, the condition in Theorem \ref{thm:pureZDS} is also satisfied, and we can construct the equalizer strategy by using pure strategies $\underline{p}_1=(0,1)^\mathsf{T}$ and $\overline{p}_1=(1,0)^\mathsf{T}$.
As noted above, a relation between Eq. (\ref{eq:B_equalizer_PD_two}) and the existence of the original equalizer strategy is a well-known fact \cite{Ued2022b}.

\subsection{Public goods game}
\label{subsec:PGG}
As another example, we consider the public goods game \cite{HWTN2014,PHRT2015}, where $\mathcal{N}=\{ 1, \cdots, N \}$ $(N\geq 2)$, $A_j=\{ C, D \}$ $(j\in \mathcal{N})$, and the payoffs are given by
\begin{align}
 s_j (\bm{a}) &= \frac{rc}{N} \sum_{l\neq j} \delta_{a_l, C} + c\left( \frac{r}{N} - 1 \right) \delta_{a_j, C}
\end{align}
with $c>0$ and $1<r<N$.
The public goods game can be regarded as an $N$-player version of the prisoner's dilemma game.

Here we again consider the existence of an equalizer-type ZD strategy of player $j$.
For such ZD strategies, we need to set
\begin{align}
 B\left( \bm{a} \right) &= \sum_{k\neq j} s_k\left( \bm{a} \right) - \mu \nonumber \\
 &= \left( \frac{N-1}{N}r - 1 \right) c \sum_{k\neq j} \delta_{a_k, C} +  \frac{N-1}{N}rc \delta_{a_j, C} - \mu.
 \label{eq:B_equalizer_PGG}
\end{align}
Below we write $p_j=\left( p_j(C), p_j(D) \right)^\mathsf{T}$.
In order to apply Lemma \ref{lem:existence_equivalent}, we consider two cases $1<r \leq N/(N-1)$ and $N/(N-1)<r<N$ separately.
\begin{enumerate}[(i)]
\item $1<r \leq N/(N-1)$\\
For this case, we can calculate as
\begin{align}
 \max_{a_{-j}} \sum_{a_j} p_j(a_j) B\left( \bm{a} \right) &= \frac{N-1}{N}rc p_j(C)  - \mu \nonumber \\
 \min_{a_{-j}} \sum_{a_j}  p_j(a_j) B\left( \bm{a} \right) &= \left( \frac{N-1}{N}r - 1 \right) c (N-1) + \frac{N-1}{N}rc p_j(C)  - \mu.
\end{align}
Therefore, we obtain
\begin{align}
 \min_{p_j} \max_{a_{-j}} \sum_{a_j} p_j(a_j) B\left( \bm{a} \right) &= -\mu \nonumber \\
 \max_{p_j} \min_{a_{-j}} \sum_{a_j}  p_j(a_j) B\left( \bm{a} \right) &= \left( \frac{N-1}{N}r - 1 \right) c (N-1) + \frac{N-1}{N}rc - \mu.
\end{align}
According to Lemma \ref{lem:existence_equivalent}, two mixed strategies $\underline{p}_j$ and $\overline{p}_j$ exist if and only if the inequalities
\begin{align}
 -\mu &\leq 0 \nonumber \\
 \left( \frac{N-1}{N}r - 1 \right) c (N-1) + \frac{N-1}{N}rc - \mu &\geq 0
\end{align}
are satisfied, which are simplified as
\begin{align}
 0 \leq \mu \leq (N-1)(r-1)c.
 \label{eq:mu_PGG_low}
\end{align}
\item $N/(N-1)<r<N$\\
For this case, we can calculate as
\begin{align}
 \max_{a_{-j}} \sum_{a_j} p_j(a_j) B\left( \bm{a} \right) &= \left( \frac{N-1}{N}r - 1 \right) c (N-1) + \frac{N-1}{N}rc p_j(C) - \mu \nonumber \\
 \min_{a_{-j}} \sum_{a_j}  p_j(a_j) B\left( \bm{a} \right) &= \frac{N-1}{N}rc p_j(C) -\mu.
\end{align}
Therefore, we obtain
\begin{align}
 \min_{p_j} \max_{a_{-j}} \sum_{a_j} p_j(a_j) B\left( \bm{a} \right) &= \left( \frac{N-1}{N}r - 1 \right) c (N-1) -\mu \nonumber \\
 \max_{p_j} \min_{a_{-j}} \sum_{a_j}  p_j(a_j) B\left( \bm{a} \right) &= \frac{N-1}{N}rc - \mu.
\end{align}
According to Lemma \ref{lem:existence_equivalent}, two mixed strategies $\underline{p}_j$ and $\overline{p}_j$ exist if and only if the inequalities
\begin{align}
 \left( \frac{N-1}{N}r - 1 \right) c (N-1) -\mu &\leq 0 \nonumber \\
 \frac{N-1}{N}rc - \mu &\geq 0,
\end{align}
are satisfied, which are simplified as
\begin{align}
 \left( \frac{N-1}{N}r - 1 \right) c (N-1) \leq \mu \leq \frac{N-1}{N}rc.
\end{align}
It should be noted that such range of $\mu$ exists only for
\begin{align}
 \frac{N}{N-1} < r \leq \frac{N}{N-2}.
\end{align}
\end{enumerate}

These results on the existence of equalizer strategies coincide with those for pure strategy case \cite{PHRT2015}.
(We remark that our definition of the payoffs is slightly different from that in Ref. \cite{PHRT2015}.)
This is because we can check that two actions $C$ and $D$ satisfy
\begin{align}
 B\left( D, a_{-j} \right) &\leq 0 \quad (\forall a_{-j} \in A_{-j}) \nonumber \\
 B\left( C, a_{-j} \right) &\geq 0 \quad (\forall a_{-j} \in A_{-j})
\end{align}
for the above parameter regions.
Therefore, similarly to the prisoner's dilemma game, the condition in Theorem \ref{thm:pureZDS} is satisfied, and we can construct the equalizer strategy by using two pure strategies.

Moreover, as in the prisoner's dilemma game, we can use mixed strategies as $\underline{p}_j$ and $\overline{p}_j$ in order to construct ZD strategies.
For example, when $1<r \leq N/(N-1)$ and Eq. (\ref{eq:mu_PGG_low}) hold, any $p_j$ satisfying
\begin{align}
 \max_{a_{-j}} \sum_{a_j} p_j(a_j) B\left( \bm{a} \right) &= \frac{N-1}{N}rc  p_j(C) -\mu \leq 0
\end{align}
can be $\underline{p}_j$, and any $p_j$ satisfying
\begin{align}
 \min_{a_{-j}} \sum_{a_j}  p_j(a_j) B\left( \bm{a} \right) &= \left( \frac{N-1}{N}r - 1 \right) c (N-1) + \frac{N-1}{N}rc p_j(C) - \mu \geq 0
\end{align}
can be $\overline{p}_j$.

\subsection{A two-player three-action symmetric game}
\label{subsec:two-three}
As another example, we consider a two-player three-action symmetric game, where $\mathcal{N}=\{ 1, 2 \}$, $A_j=\{ 1, 2, 3 \}$ $(j=1, 2)$, and the payoffs are given as Table \ref{table:twothree}.
\begin{table}[tb]
  \centering
  \caption{Payoffs of a two-player three-action symmetric game.}
  \begin{tabular}{|c|ccc|} \hline
    & $1$ & $2$ & $3$ \\ \hline
   $1$ & $-3, -3$ & $1, 2$ & $-1, -\frac{7}{3}$ \\
   $2$ & $2, 1$ & $-1, -1$ & $2, 1$ \\
   $3$ & $-\frac{7}{3}, -1$ & $1, 2$ & $-\frac{3}{2}, -\frac{3}{2}$ \\ \hline
  \end{tabular}
  \label{table:twothree}
\end{table}

When we set
\begin{align}
 B\left( \bm{a} \right) &= s_2\left( \bm{a} \right)
\end{align}
and write $p_j=\left( p_j(1), p_j(2), p_j(3) \right)^\mathsf{T}$ $(j=1,2)$, $\tilde{B}$ is given by
\begin{align}
 \tilde{B} \left( \bm{p} \right) &= p_1^\mathsf{T} \left(
\begin{array}{ccc}
-3 & 2 & -\frac{7}{3}\\
1 & -1 & 1 \\
-1 & 2 & -\frac{3}{2} \\
\end{array}
\right) p_2.
\end{align}
Then, we find
\begin{align}
 \tilde{B} \left( \left( \frac{1}{3}, \frac{2}{3}, 0 \right)^\mathsf{T}, p_2 \right) &= \left(
\begin{array}{ccc}
-\frac{1}{3} & 0 & -\frac{1}{9}
\end{array}
\right) p_2 \leq 0 \quad (\forall p_2 \in \Delta(A_2))
\end{align}
and
\begin{align}
 \tilde{B} \left( \left( 0, \frac{2}{3}, \frac{1}{3} \right)^\mathsf{T}, p_2 \right) &= \left(
\begin{array}{ccc}
\frac{1}{3} & 0 & \frac{1}{6}
\end{array}
\right) p_2 \geq 0 \quad (\forall p_2 \in \Delta(A_2)).
\end{align}
Therefore, we can regard $\underline{p}_1=(1/3, 2/3, 0)^\mathsf{T}$ and $\overline{p}_1=(0, 2/3, 1/3)^\mathsf{T}$ in Lemma \ref{lem:existence_equivalent}, and can construct a two-point ZD strategy of player $1$, which unilaterally enforces
\begin{align}
 \mathcal{U}_2 &= 0,
\end{align}
by using Proposition \ref{prop:two-point}.

In Figure \ref{fig:linear_sym}, we display a relation between $\mathcal{U}_1$ and $\mathcal{U}_2$ when player $1$ uses the two-point ZD strategy (\ref{eq:transition_two-point}) and player $2$ uses randomly-chosen two-point memory-one strategies.
\begin{figure}[tbp]
\begin{center}
\includegraphics[clip, width=8.0cm]{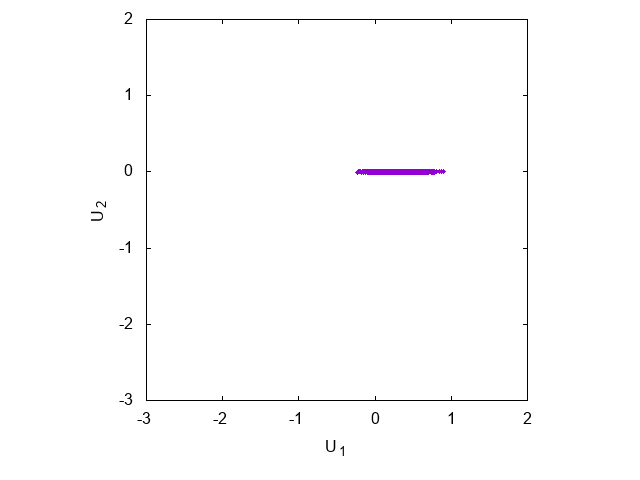}
\end{center}
\caption{A linear relation between $\mathcal{U}_1$ and $\mathcal{U}_2$ when player $1$ uses the two-point ZD strategy (\ref{eq:transition_two-point}) and player $2$ uses $10^4$ randomly-chosen two-point memory-one strategies. Parameter is set to $W=3$. Each $\mathcal{U}_j$ is calculated by time average over $10^5$ time steps.}
\label{fig:linear_sym}
\end{figure}
We find that a linear relation $\mathcal{U}_2 = 0$ is indeed enforced.

We remark that none of the actions of player $1$ becomes $\underline{a}_1$ or $\overline{a}_1$ in Eq. (\ref{eq:existence_condition_pure}) in Theorem \ref{thm:pureZDS}.
Therefore, differently from the prisoner's dilemma case, the corresponding pure strategy game does not contain a ZD strategy controlling $s_2$, and we conclude that mixed extension broadens the range of possible ZD strategies compared to the original repeated games (Theorem \ref{thm:broaden}).

This fact can also be understood by displaying $\underline{p}_1$ and $\overline{p}_1$ in $\Delta(A_1)$ as in Figure \ref{fig:simplex_twothree}.
\begin{figure}[tbp]
\begin{center}
\includegraphics[clip, width=10.0cm]{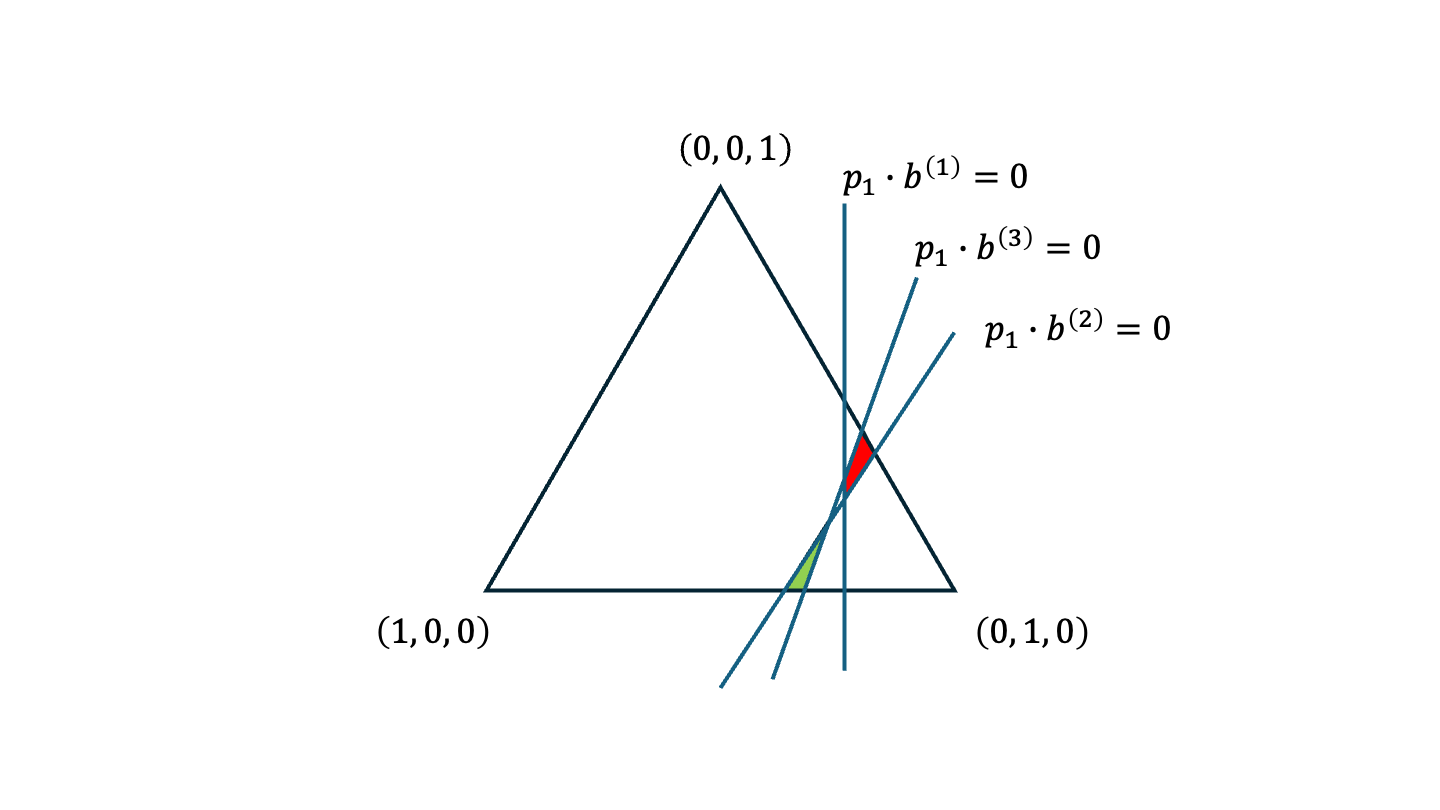}
\end{center}
\caption{Feasible regions of $\underline{p}_1$ and $\overline{p}_1$ in a probability simplex $\Delta(A_1)$. We define $b^{(a_2)}(a_1):= B(a_1, a_2)$ for simplicity. Any $p_1$ in the green region can be used as $\underline{p}_1$, and any $p_1$ in the red region can be used as $\overline{p}_1$.}
\label{fig:simplex_twothree}
\end{figure}
In this figure, we have defined $b^{(a_2)}(a_1):= B(a_1, a_2)$ for simplicity.
According to Lemma \ref{lem:existence_equivalent}, $p_1$ satisfying $p_1\cdot b^{(a_2)} \leq 0$ $(\forall a_2)$, if exists, can be used as $\underline{p}_1$, and $p_1$ satisfying $p_1\cdot b^{(a_2)} \geq 0$ $(\forall a_2)$, if exists, can be used as $\overline{p}_1$.
Such two $p_1$ exist for this case, and $p_1$ in the green and red region can be $\underline{p}_1$ and $\overline{p}_1$, respectively.
It should be noted that each vertex of the equilateral triangle corresponds to a pure strategy, and is not contained in both regions.

\section{One-point ZD strategies}
\label{sec:one-point}
In this section, we consider a special situation where $\underline{p}_j=\overline{p}_j=:p_j^{(0)}$ in the condition (\ref{eq:existence_condition}), which implies
\begin{align}
 \tilde{B}\left(p_j^{(0)}, p_{-j} \right) &= 0 \quad (\forall p_{-j} \in \tilde{\Delta}(A_{-j})).
 \label{eq:B_one-point}
\end{align}
For such case, a two-point ZD strategy in Proposition \ref{prop:two-point} becomes
\begin{align}
 T_j\left( p_j | \bm{p}^\prime \right) &= \delta\left( p_j-p_j^\prime \right) \quad (\forall p_j^\prime \in \Delta\left(A_j\right)^\prime, \forall p_{-j}^\prime \in \tilde{\Delta}(A_{-j}))
 \label{eq:transition_one-point}
\end{align}
defined on $\Delta\left(A_j\right)^\prime=\left\{ p_j^{(0)} \right\}$.
This strategy satisfies
\begin{align}
 \int \psi(p_j) T_j\left( p_j | \bm{p}^\prime \right) dp_j - \psi\left( p^\prime_j \right) &= 0 = \tilde{B}\left( \bm{p}^\prime \right) \quad (\forall p_j^\prime \in \Delta\left(A_j\right)^\prime, \forall p_{-j}^\prime \in \tilde{\Delta}(A_{-j}))
\end{align}
for arbitrary $\psi$.
Although such ``Repeat'' strategies \cite{Aki2016} have not been regarded as ZD strategies \cite{PreDys2012}, here we call such strategies as \emph{one-point ZD strategies} only if Eq. (\ref{eq:B_one-point}) holds, because they still unilaterally enforce Eq. (\ref{eq:linear_AS}).
This fact is summarized as a corollary of Proposition \ref{prop:two-point}.
\begin{corollary}
\label{cor:one-point}
If there exists a mixed strategy $p_j^{(0)} \in \Delta(A_j)$ satisfying Eq. (\ref{eq:B_one-point}), then the memory-one strategy (\ref{eq:transition_one-point}) of player $j$ defined on $\Delta\left(A_j\right)^\prime=\left\{ p_j^{(0)} \right\}$ is a ZD strategy controlling $\tilde{B}$.
\end{corollary}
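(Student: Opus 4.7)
The plan is to obtain Corollary \ref{cor:one-point} as the degenerate case of Proposition \ref{prop:two-point} in which the two anchor points coincide, $\underline{p}_j = \overline{p}_j = p_j^{(0)}$. First I would check that the hypotheses of Proposition \ref{prop:two-point} are satisfied in this degenerate setting: the assumption $\tilde{B}\left(p_j^{(0)}, p_{-j}\right) = 0$ for all $p_{-j}\in \tilde{\Delta}(A_{-j})$ supplied by Eq. (\ref{eq:B_one-point}) makes the sandwich condition (\ref{eq:existence_condition}) hold with both bounds collapsing to zero, and any $W>0$ works since the payoffs are bounded.

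Next I would substitute $\underline{p}_j = \overline{p}_j = p_j^{(0)}$ into the transition kernel (\ref{eq:transition_two-point}). The two delta terms $\frac{1}{W}\tilde{B}(\bm{p}^\prime)\,\delta(p_j-\underline{p}_j)$ and $-\frac{1}{W}\tilde{B}(\bm{p}^\prime)\,\delta(p_j-\overline{p}_j)$ cancel identically, leaving precisely $T_j(p_j|\bm{p}^\prime) = \delta(p_j - p_j^\prime)$ on the restricted action set $\Delta(A_j)^\prime = \{p_j^{(0)}\}$, which is the Repeat strategy (\ref{eq:transition_one-point}).

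Finally I would confirm directly that Definition \ref{def:ZD} is satisfied. Since the restricted action set consists of the single point $p_j^{(0)}$, the only relevant value of $p_j^\prime$ is $p_j^{(0)}$, and the delta kernel gives
\begin{align}
\int \psi(p_j)\, T_j\!\left(p_j \mid p_j^{(0)}, p_{-j}^\prime\right)\, dp_j - \psi\!\left(p_j^{(0)}\right) = \psi\!\left(p_j^{(0)}\right) - \psi\!\left(p_j^{(0)}\right) = 0,
\end{align}
while Eq. (\ref{eq:B_one-point}) gives $\tilde{B}\!\left(p_j^{(0)}, p_{-j}^\prime\right) = 0$ for every $p_{-j}^\prime$, so the two sides of Eq. (\ref{eq:AS}) agree trivially for any bounded $\psi$.

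There is essentially no obstacle here: the only subtle point is observing that both conditions in Proposition \ref{prop:two-point} degenerate to the single equality (\ref{eq:B_one-point}) when the two anchors coincide, and that the resulting degenerate kernel, although it never updates the mixed strategy, still formally qualifies as a ZD strategy in the sense of Definition \ref{def:ZD} because Eq. (\ref{eq:AS}) reduces to the vacuous identity $0 = \tilde{B}(\bm{p}^\prime)$ on the support $\Delta(A_j)^\prime$.
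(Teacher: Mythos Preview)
Your proposal is correct and mirrors the paper's own argument essentially step for step: the paper also obtains the corollary by specializing Proposition \ref{prop:two-point} to the degenerate case $\underline{p}_j=\overline{p}_j=p_j^{(0)}$, observes that the kernel (\ref{eq:transition_two-point}) collapses to the Repeat strategy (\ref{eq:transition_one-point}), and verifies Definition \ref{def:ZD} by the identity $0=\tilde{B}(\bm{p}^\prime)$ on the restricted set.
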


Below we provide several examples of one-point ZD strategies.

\subsection{Matching pennies}
First, we consider the matching pennies game \cite{FudTir1991}, where $\mathcal{N}=\{ 1, 2 \}$, $A_j=\{ 1, 2 \}$ $(j=1, 2)$, and the payoffs are given as Table \ref{table:MP}.
\begin{table}[tb]
  \centering
  \caption{Payoffs of the matching pennies game.}
  \begin{tabular}{|c|cc|} \hline
    & 1 & 2 \\ \hline
   1 & $1, -1$ & $-1, 1$ \\
   2 & $-1, 1$ & $1, -1$ \\ \hline
  \end{tabular}
  \label{table:MP}
\end{table}
Here we consider the existence of an equalizer-type strategy of player $1$:
\begin{align}
 B\left( \bm{a} \right) &= s_2\left( \bm{a} \right) - r
 \label{eq:B_equalizer_MP}
\end{align}
with $-1\leq r \leq 1$.
When we write $p_j=\left( p_j(1), p_j(2) \right)^\mathsf{T}$ $(j=1,2)$, $\tilde{B}$ is given by
\begin{align}
 \tilde{B} \left( \bm{p} \right) &= p_1^\mathsf{T} \left(
\begin{array}{cc}
-1-r & 1-r \\
1-r & -1-r
\end{array}
\right) p_2 \nonumber \\
&= \left(
\begin{array}{cc}
1-2p_1(1) - r & -1+2p_1(1) - r
\end{array}
\right) p_2.
\end{align}
According to Lemma \ref{lem:existence_equivalent}, for the existence of two-point ZD strategies, two mixed strategies $\underline{p}_1, \overline{p}_1$ satisfying
\begin{align}
 1-2\underline{p}_1(1) - r &\leq 0 \nonumber \\
 -1+2\underline{p}_1(1) - r &\leq 0 \nonumber \\
 1-2\overline{p}_1(1) - r &\geq 0 \nonumber \\
 -1+2\overline{p}_1(1) - r &\geq 0 
\end{align}
are necessary.
By summing the first and second inequalities, and the third and fourth inequalities, we obtain
\begin{align}
 r &\geq 0 \nonumber \\
 r &\leq 0,
\end{align}
respectively.
Therefore, $r=0$ must hold.
Furthermore, these inequalities are satisfied only for
\begin{align}
  \underline{p}_1(1) &= \frac{1}{2} \nonumber \\
  \overline{p}_1(1) &= \frac{1}{2}.
\end{align}
Thus, $\underline{p}_1=\overline{p}_1=(1/2, 1/2)^\mathsf{T}$, and we obtain
\begin{align}
 \tilde{B} \left( \left( \frac{1}{2}, \frac{1}{2} \right)^\mathsf{T}, p_2 \right) &= \left(
\begin{array}{cc}
0 & 0
\end{array}
\right) p_2 \nonumber \\
&= 0 \quad (\forall p_2 \in \Delta(A_2)).
\end{align}
Therefore, by using this mixed strategy, we can construct a one-point ZD strategy which unilaterally enforces
\begin{align}
 \mathcal{U}_2 &= 0.
\end{align}

We remark that this result is well-known as a property of the mixed-strategy Nash equilibrium, where every action in the support of any player's equilibrium mixed strategy yields that player the same payoff \cite{OsbRub1994}.
In the matching pennies game, a mixed-strategy profile $(p_1, p_2) = \left( (1/2, 1/2)^\mathsf{T}, (1/2, 1/2)^\mathsf{T} \right)$ is the Nash equilibrium, and once player $1$ uses a mixed strategy $p_1=(1/2, 1/2)^\mathsf{T}$, two actions of player $2$ yield the same payoff.

We also remark that none of the actions of player $1$ becomes $\underline{a}_1$ or $\overline{a}_1$ in Eq. (\ref{eq:existence_condition_pure}) with $r=0$ in Theorem \ref{thm:pureZDS}.
Therefore, this is another example of Theorem \ref{thm:broaden}.

\subsection{Battle of the sexes}
Next, we consider the battle of the sexes game \cite{OsbRub1994}, where $\mathcal{N}=\{ 1, 2 \}$, $A_j=\{ 1, 2 \}$ $(j=1, 2)$, and the payoffs are given as Table \ref{table:BS}.
\begin{table}[tb]
  \centering
  \caption{Payoffs of the battle of the sexes game.}
  \begin{tabular}{|c|cc|} \hline
    & 1 & 2 \\ \hline
   1 & $2, 1$ & $0, 0$ \\
   2 & $0, 0$ & $1, 2$ \\ \hline
  \end{tabular}
  \label{table:BS}
\end{table}
Again, we consider the existence of an equalizer-type strategy of player $1$:
\begin{align}
 B\left( \bm{a} \right) &= s_2\left( \bm{a} \right) - r
 \label{eq:B_equalizer_BS}
\end{align}
with $0\leq r \leq 2$.
When we write $p_j=\left( p_j(1), p_j(2) \right)^\mathsf{T}$ $(j=1,2)$, $\tilde{B}$ is given by
\begin{align}
 \tilde{B} \left( \bm{p} \right) &= p_1^\mathsf{T} \left(
\begin{array}{cc}
1-r & -r \\
-r & 2-r
\end{array}
\right) p_2 \nonumber \\
&= \left(
\begin{array}{cc}
p_1(1) - r & 2-2p_1(1) - r
\end{array}
\right) p_2.
\end{align}
According to Lemma \ref{lem:existence_equivalent}, for the existence of two-point ZD strategies, two mixed strategies $\underline{p}_1, \overline{p}_1$ satisfying
\begin{align}
 \underline{p}_1(1) - r &\leq 0 \nonumber \\
 2-2\underline{p}_1(1) - r &\leq 0 \nonumber \\
 \overline{p}_1(1) - r &\geq 0 \nonumber \\
 2-2\overline{p}_1(1) - r &\geq 0 
\end{align}
are necessary.
From the first and second inequalities, and the third and fourth inequalities, we obtain
\begin{align}
 2-3r &\leq 0 \nonumber \\
 2-3r &\geq 0,
\end{align}
respectively.
Therefore, $r=2/3$ must hold.
Furthermore, these inequalities are satisfied only for
\begin{align}
  \underline{p}_1(1) &= \frac{2}{3} \nonumber \\
  \overline{p}_1(1) &= \frac{2}{3}.
\end{align}
Thus, $\underline{p}_1=\overline{p}_1=(2/3, 1/3)^\mathsf{T}$, and we obtain
\begin{align}
 \tilde{B} \left( \left( \frac{2}{3}, \frac{1}{3} \right)^\mathsf{T}, p_2 \right) &= \left(
\begin{array}{cc}
0 & 0
\end{array}
\right) p_2 \nonumber \\
&= 0 \quad (\forall p_2 \in \Delta(A_2)).
\end{align}
Therefore, by using this mixed strategy, we can construct a one-point ZD strategy which unilaterally enforces
\begin{align}
 \mathcal{U}_2 - \frac{2}{3} &= 0.
\end{align}
We remark that this result is also well-known as a property of the mixed-strategy Nash equilibrium $(p_1, p_2) = \left( (2/3, 1/3)^\mathsf{T}, (1/3, 2/3)^\mathsf{T} \right)$.

Again, none of the actions of player $1$ becomes $\underline{a}_1$ or $\overline{a}_1$ in Eq. (\ref{eq:existence_condition_pure}) with $r=2/3$ in Theorem \ref{thm:pureZDS}.
Therefore, this is another example of Theorem \ref{thm:broaden}.

\subsection{Rock-paper-scissors game}
Here, we consider the rock-paper-scissors game, where $\mathcal{N}=\{ 1, 2 \}$, $A_j=\{ R, P, S \}$ $(j=1, 2)$, and the payoffs are given as Table \ref{table:RPS}.
\begin{table}[tb]
  \centering
  \caption{Payoffs of the rock-paper-scissors game.}
  \begin{tabular}{|c|ccc|} \hline
    & $R$ & $P$ & $S$ \\ \hline
   $R$ & $0, 0$ & $-1, 1$ & $1, -1$ \\
   $P$ & $1, -1$ & $0, 0$ & $-1, 1$ \\
   $S$ & $-1, 1$ & $1, -1$ & $0, 0$ \\ \hline
  \end{tabular}
  \label{table:RPS}
\end{table}
We consider the existence of an equalizer-type strategy of player $1$:
\begin{align}
 B\left( \bm{a} \right) &= s_2\left( \bm{a} \right) - r
 \label{eq:B_equalizer_RPS}
\end{align}
with $-1\leq r \leq 1$.
When we write $p_j=\left( p_j(R), p_j(P), p_j(S) \right)^\mathsf{T}$ $(j=1,2)$, $\tilde{B}$ is given by
\begin{align}
 \tilde{B} \left( \bm{p} \right) &= p_1^\mathsf{T} \left(
\begin{array}{ccc}
-r & 1-r & -1-r \\
-1-r & -r & 1-r \\
1-r & -1-r & -r \\
\end{array}
\right) p_2 \nonumber \\
&= \left(
\begin{array}{ccc}
-p_1(P)+p_1(S) - r & p_1(R)-p_1(S) - r & -p_1(R)+p_1(P) - r
\end{array}
\right) p_2.
\end{align}
According to Lemma \ref{lem:existence_equivalent}, for the existence of two-point ZD strategies, two mixed strategies $\underline{p}_1, \overline{p}_1$ satisfying
\begin{align}
 -\underline{p}_1(P)+\underline{p}_1(S) - r &\leq 0 \nonumber \\
\underline{p}_1(R)-\underline{p}_1(S) - r &\leq 0 \nonumber \\
-\underline{p}_1(R)+\underline{p}_1(P) - r &\leq 0 \nonumber \\
 -\overline{p}_1(P)+\overline{p}_1(S) - r &\geq 0 \nonumber \\
\overline{p}_1(R)-\overline{p}_1(S) - r &\geq 0 \nonumber \\
-\overline{p}_1(R)+\overline{p}_1(P) - r &\geq 0 
\end{align}
are necessary.
By summing the inequalities, we obtain
\begin{align}
 -r &\leq 0 \nonumber \\
 -r &\geq 0.
\end{align}
Therefore, $r=0$ must hold.
Furthermore, these inequalities are satisfied only for
\begin{align}
  \underline{p}_1(R) &= \underline{p}_1(P) = \underline{p}_1(S) \nonumber \\
  \overline{p}_1(R) &= \overline{p}_1(P) = \overline{p}_1(S).
\end{align}
Thus, $\underline{p}_1=\overline{p}_1=(1/3, 1/3, 1/3)^\mathsf{T}$, and we obtain
\begin{align}
 \tilde{B} \left( \left( \frac{1}{3}, \frac{1}{3}, \frac{1}{3} \right)^\mathsf{T}, p_2 \right) &= \left(
\begin{array}{ccc}
0 & 0 & 0
\end{array}
\right) p_2 \nonumber \\
&= 0 \quad (\forall p_2 \in \Delta(A_2)).
\end{align}
Therefore, by using this mixed strategy, we can construct a one-point ZD strategy which unilaterally enforces
\begin{align}
 \mathcal{U}_2 &= 0.
\end{align}
We remark that this result is also well-known as a property of the mixed-strategy Nash equilibrium $(p_1, p_2) = \left( (1/3, 1/3, 1/3)^\mathsf{T}, (1/3, 1/3, 1/3)^\mathsf{T} \right)$.

It should be noted that the original rock-paper-scissors game does not contain any ZD strategies \cite{UedTan2020}.
Therefore, this is another example that mixed extension broadens the range of possible ZD strategies compared to the original repeated games (Theorem \ref{thm:broaden}).

\section{Concluding remarks}
\label{sec:conclusion}
In this paper, we investigated the existence of two-point ZD strategies in repeated games where action sets in stage games are continuously relaxed.
Through an example in subsection \ref{subsec:two-three}, we found that the existence condition of two-point ZD strategies in continuously-relaxed action sets is weaker than that of two-point ZD strategies in the original finite action sets.
Furthermore, we introduced the concept of one-point ZD strategies, as a special case of two-point ZD strategies.
In our three examples, we found that a property of mixed-strategy Nash equilibria can be reinterpreted as a payoff-control property of one-point ZD strategies.

Before ending this paper, we provide three remarks.
The first remark is on our interpretation of mixed strategies.
In Section \ref{sec:model}, we assumed that we interpret a mixed strategy as a strategy realized by a population.
This interpretation of mixed strategies is necessary for our study, because a behavior strategy in Eq. (\ref{eq:behavior_strategy}) makes sense only when mixed strategies in previous rounds are explicitly observed.
If we interpret a mixed strategy as a probability distribution of actions taken by one player, each round must contain many games, which is an unnatural setup.
Therefore, we used the interpretation that a mixed strategy is a strategy realized by a population.
Unexpectedly, this interpretation is compatible with a standard assumption of evolutionary game theory \cite{HofSig1998}, where a mixed strategy describes a population.
Particularly, discrete-time coupled replicator equations in evolutionary game theory are a special case of our Markov chain with deterministic transition probability.
Therefore, if we seek for realistic situations to which we apply our zero-determinant strategies on mixed action spaces, evolutionary game theory is the most suitable candidate.
It should be noted that evolutionary performance of ZD strategies in structured populations has attracted much attention \cite{SzoPer2014,LLXH2015,DZWLW2019,SGCG2024,CGCF2024}.

The second remark is related to possible ZD strategies in continuously-relaxed action sets.
When action sets of all players are finite sets, the existence condition of two-point ZD strategies is equivalent to that of general ZD strategies \cite{Ued2025}.
At this stage, we do not know whether this is also true when action sets of some players are infinite sets.
One may expect that, when action sets of some players are infinite sets, ZD strategies can exist under weaker condition than one in Ref. \cite{Ued2022b}.
We will try constructing ZD strategies which do not satisfy the condition (\ref{eq:existence_condition}) in our continuously-relaxed action sets in future.

The third remark is on the relation between a property of mixed-strategy Nash equilibria and a property of one-point ZD strategies.
Although we provided three examples where equilibrium mixed strategies appear as $p_j^{(0)}$ in Section \ref{sec:one-point}, this is not a general result.
As noted in the section, in mixed-strategy Nash equilibria, only actions in the support of a player's equilibrium mixed strategy yield that player the same payoff.
Therefore, even in two-player games, the property of mixed-strategy Nash equilibria can be reinterpreted as a payoff-control property of one-point ZD strategies if and only if all actions are in the support of the equilibrium mixed strategy of the opponent.
Further investigation is needed to clarify the relation between them.

\section*{Acknowledgement}
This study was supported by JSPS KAKENHI Grant Number JP20K19884 and Toyota Riken Scholar Program.



\section*{References}

\bibliography{mixedZDS}

\end{document}